\documentclass[11pt,letterpaper]{article}
\usepackage[letterpaper, left=1in, right=1in, top=0.9in, bottom=0.9in]{geometry}
\usepackage[american]{babel}
\usepackage[normalem]{ulem}
\usepackage{amsmath, amssymb, cases, amsthm}
\usepackage{thmtools}
\usepackage[shortlabels]{enumitem}
\usepackage{mdframed}
\usepackage{bbm}
\usepackage{bm}
\usepackage{microtype}
\usepackage{xcolor}
\usepackage{makecell}
\usepackage{mathtools}
\usepackage{algorithmic}
\usepackage[procnumbered,ruled,vlined,linesnumbered]{algorithm2e}
\usepackage{float}
\usepackage{varwidth}
\usepackage{modletter}
\usepackage{tcolorbox}
\usepackage{mathrsfs}
\newtcolorbox{construction}[2][]
{
	colframe = gray!50,
	colback  = gray!10,
	coltitle = gray!10!black,
	left*=0mm, 
	before skip = 10pt,
	after skip = 10pt,
	title    = \textbf{\space\space #2},
	#1,
}

\SetKwInput{KwData}{Input}
\SetKwInput{KwResult}{Output}
\SetKwInput{KwGlobalVar}{Global variables}

\usepackage[bookmarks,colorlinks,breaklinks]{hyperref}
\hypersetup{urlcolor=blue, colorlinks=true, citecolor=green!50!black, linkcolor=blue}
\usepackage[capitalize,nosort,nameinlink]{cleveref}

\declaretheorem[numberwithin=section,refname={Theorem,Theorems},Refname={Theorem,Theorems}]{theorem}

\declaretheorem[numberlike=theorem]{lemma}
\declaretheorem[numberlike=theorem]{proposition}
\declaretheorem[numberlike=theorem]{corollary}

\theoremstyle{definition}

\def\final{0}  %
\ifnum\final=0  %
\newcommand{\todo}[1]{{\color{red}[{\tiny TODO: \bf #1}]\marginpar{\color{red}*}}}
\newcommand{\yonggang}[1]{{\color{blue}[{\tiny Yonggang: \bf #1}]\marginpar{\color{blue}*}}}
\else %
\newcommand{\yonggang}[1]{}
\newcommand{\todo}[1]{}
\fi

\newcommand{\tout}{\text{out}}
\newcommand{\tin}{\text{in}}

\begin{document} 
\sloppy

\title{
Directed and Undirected Vertex Connectivity Problems\\are Equivalent for Dense Graphs
}
\author{
Olivier Fischer\thanks{ETH Zürich, Switzerland, \texttt{olivier.fischer@inf.ethz.ch}}
\and
Yonggang Jiang\thanks{MPI-INF \& Saarland University, \texttt{yjiang@mpi-inf.mpg.de}}
\and
Sagnik Mukhopadhyay\thanks{University of Birmingham, United Kingdom, \texttt{s.mukhopadhyay@bham.ac.uk}}
\and
Sorrachai Yingchareonthawornchai\thanks{Institute for Theoretical Studies, ETH Zürich, Switzerland, \texttt{sorrachai.yingchareonthawornchai@eth-its.ethz.ch}} 
}
\date{}
\maketitle

\begin{abstract}

Vertex connectivity and its variants are among the most fundamental problems in graph theory, with decades of extensive study and numerous algorithmic advances. The directed variants of vertex connectivity are usually solved by manually extending fast algorithms for undirected graphs, which has required considerable effort. In this paper, we present an extremely simple reduction from directed to undirected vertex connectivity for dense graphs.  As immediate corollaries, we vastly simplify the proof for directed vertex connectivity in $n^{2+o(1)}$ time~\cite{LNPSY25}, and obtain a parallel vertex connectivity algorithm for directed graphs with $n^{\omega+o(1)}$ work and $n^{o(1)}$ depth, via the undirected vertex connectivity algorithm of~\cite{BlikstadJMY25}. 

Our reduction further extends to the weighted, all-pairs and Steiner versions of the problem. By combining our reduction with the recent subcubic-time algorithm for undirected weighted vertex cuts~\cite{ChuzhoyT25}, we obtain a subcubic-time algorithm for weighted directed vertex connectivity, improving upon a three-decade-old bound~\cite{HenzingerRG00} for dense graphs. For the all-pairs version, by combining the conditional lower bounds on the all-pairs vertex connectivity problem for directed graphs~\cite{AbboudGIKPTUW19}, we obtain an alternate proof of the conditional lower bound for the all-pairs vertex connectivity problem on undirected graphs, vastly simplifying the proof by \cite{HuangLSW23}.

\end{abstract}

\clearpage
\section{Introduction}

Vertex connectivity and its variants are among the most fundamental graph problems and have received extensive studies over the last several decades \cite{Kleitman1969methods, Podderyugin1973algorithm, EvenT75, Even75, Galil80, EsfahanianH84, Matula87, BeckerDDHKKMNRW82, LinialLW88, CheriyanT91, NagamochiI92, CheriyanR94, Henzinger97, HenzingerRG00, Gabow06, Censor-HillelGK14, NanongkaiSY19,ForsterNYSY20,  SaranurakY22,NSY23,BlikstadJMY25,Korhonen25,ChuzhoyT25,JiangNSY25}.  For an undirected graph, a \emph{vertex cut} is a vertex set whose removal disconnects the graph. The problem of \emph{vertex connectivity} asks to find a minimum vertex cut of a non-complete undirected graph. For directed graphs, a vertex cut is a vertex set whose removal renders the graph not strongly connected. The problem of \emph{directed vertex connectivity} is to find a minimum vertex cut on a non-complete directed graph. 

For a long time, %
progress in solving directed vertex connectivity often involved a manual, ad-hoc process of ``mimicking" fast algorithms from the undirected graphs.  For example, consider the fastest known algorithm %
for vertex connectivity. A long line of research~\cite{NanongkaiSY19,ForsterNYSY20,LNPSY25} culminates in an almost linear-time randomized algorithm for undirected vertex connectivity. For directed graphs, \cite{LNPSY25} extends their techniques and spends 9 more pages to obtain an $n^{2+o(1)}$-time randomized algorithm for directed vertex connectivity, where $n$ denotes the number of vertices.

A similar phenomenon appeared in the algebraic approach, which was first proposed for undirected graphs~\cite{LinialLW88}; later, the same approach was manually adapted to work for directed graphs~\cite{CheriyanT91}. Very recently, this approach has been made faster on undirected graphs using common-neighborhood clustering techniques to obtain an $n^{\omega+o(1)}$-time algorithm~\cite{BlikstadJMY25}, where $\omega \leq 2.371339$ is the matrix multiplication exponent~\cite{AlmanDWXXZ25}. It is conceivable that their techniques can also be extended to directed graphs, but this has not been done yet.

This suggested a deep connection between undirected and directed problems. In this paper, we resolve the missing link by introducing a \textit{simple black-box}  reduction from directed to undirected vertex connectivity for dense graphs. This formalizes the long-observed phenomenon that the problems were related and explains why mimicking was so often successful. 

\begin{theorem}[Informal version of \Cref{thm:reduction}]\label{thm:intro_1}
    The directed vertex connectivity problem on an $n$-vertex directed graph reduces to the undirected vertex connectivity problem on a graph with $2n$ vertices and $O(n^2)$ edges.
\end{theorem}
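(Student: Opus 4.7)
The plan is to define, directly from $D = (V, E)$, an undirected graph $G$ with vertex set $V_{\tin} \sqcup V_{\tout}$ (two disjoint copies of $V$, for a total of $2n$ vertices) such that the vertex cuts of $G$ cleanly correspond to vertex cuts of $D$, and vice versa. Concretely, I will make both $V_{\tin}$ and $V_{\tout}$ into cliques of $G$, and add a cross edge $\{u_{\tin}, v_{\tout}\}$ whenever $(u, v) \in E(D)$ or $u = v$; the total edge count is $O(n^2)$. I will then aim to prove the identity $\kappa(G) = \kappa(D) + n$, so that the directed connectivity of $D$ is recovered from $\kappa(G)$ by subtracting $n$.

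Because $V_{\tin}$ and $V_{\tout}$ are cliques, the only non-adjacent pairs in $G$ are $(s_{\tin}, t_{\tout})$ with $s \neq t$ and $(s,t) \notin E(D)$, so the task reduces to showing $\kappa_G(s_{\tin}, t_{\tout}) = \kappa_D(s, t) + n$ for each such pair. The upper bound is immediate from a natural lift: given a minimum directed $(s,t)$-cut $M$ in $D$ with partition $V \setminus M = L \sqcup R$ ($s \in L$, $t \in R$), the set
\[
S' \;=\; \{v_{\tout} : v \in L\} \,\cup\, \{v_{\tin} : v \in R\} \,\cup\, \{v_{\tin} : v \in M\} \,\cup\, \{v_{\tout} : v \in M\}
\]
has size $|L| + |R| + 2|M| = n + |M|$ and separates $s_{\tin}$ from $t_{\tout}$ in $G$, because the two surviving clique halves $\{v_{\tin} : v \in L\}$ and $\{v_{\tout} : v \in R\}$ have no cross edge between them (blocked by $M$ being a directed cut in $D$, and by $L \cap R = \emptyset$ in the case of the diagonal cross edges).

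I expect the matching lower bound to be the heart of the argument, because it is where the classical ``zigzag'' undirected paths would naively defeat a bare vertex-splitting reduction. The point will be that the clique padding on each side resolves the issue at the cut level. For any undirected vertex cut $S'$ separating $s_{\tin}$ from $t_{\tout}$, define $L = \{v : v_{\tin} \notin S'\}$, $R = \{v : v_{\tout} \notin S'\}$, and $M = \{v : v_{\tin}, v_{\tout} \in S'\}$. Then $V_{\tin} \setminus S'$ and $V_{\tout} \setminus S'$ are cliques containing $s_{\tin}$ and $t_{\tout}$ respectively, so they must be the entire $s_{\tin}$- and $t_{\tout}$-components of $G \setminus S'$. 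The absence of a cross edge between them forces, via the $u = v$ diagonal cross edges, that $L \cap R = \emptyset$ (so $V = L \sqcup R \sqcup M$ and $|S'| = n + |M|$), and, via the $(u,v) \in E$ cross edges, that no directed edge goes from $L$ to $R$ in $D$, making $M$ a directed $(s,t)$-vertex cut in $D$. Combining these two conclusions, $|S'| \geq n + \kappa_D(s, t)$, which matches the upper bound and completes the reduction.
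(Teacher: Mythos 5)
Your construction is identical to the paper's (up to swapping the names of the $\tin$/$\tout$ copies), and your cut correspondence — cliques forcing each side of the cut into one copy, the diagonal/matching edges forcing $L \cap R = \emptyset$, the cross edges forcing $M$ to be a directed cut, and the count $|S'| = n + |M|$ — is exactly the paper's argument (its Propositions on cut structure plus the neighborhood-counting lemma), just organized through non-adjacent pairs rather than the global cut directly. The proposal is correct and essentially the same as the paper's proof.
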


 The reduction is short and elementary, which is surprising given that vertex connectivity has received considerable attention over the past five decades. %

As a corollary of \Cref{thm:intro_1}, we obtain the following results immediately. Combining with the $m^{1+o(1)}$ undirected vertex connectivity algorithm by \cite{LNPSY25}, we obtain:
\begin{corollary}
There is an $n^{2+o(1)}$-time directed vertex connectivity algorithm.
\end{corollary}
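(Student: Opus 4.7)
The plan is to compose the reduction from Theorem~\ref{thm:intro_1} with the almost-linear-time undirected vertex connectivity algorithm of~\cite{LNPSY25}. Given an $n$-vertex directed graph $G$, I would first invoke Theorem~\ref{thm:intro_1} to construct an undirected graph $H$ on $2n$ vertices with $O(n^2)$ edges whose minimum vertex cut encodes a minimum vertex cut of $G$. Then I would run the $m^{1+o(1)}$-time undirected vertex connectivity algorithm of~\cite{LNPSY25} on $H$; since $H$ has $m = O(n^2)$ edges, this step runs in $(n^2)^{1+o(1)} = n^{2+o(1)}$ time. Finally, I would translate the resulting cut of $H$ back to a cut of $G$ via the explicit correspondence provided by the reduction.

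The construction of $H$ itself is bounded by its size, $O(n^2)$, which is absorbed into $n^{2+o(1)}$. Correctness follows immediately from the guarantee of Theorem~\ref{thm:intro_1}, and the overall runtime is dominated by the call to the undirected algorithm. There is no substantive obstacle here: the corollary is a direct black-box composition, and essentially all of the difficulty is deferred to the proof of the reduction in Theorem~\ref{thm:intro_1}. The one minor point worth noting is that the density of the reduced instance is $\Theta(n^2)$ regardless of the density of the input $G$, so feeding a sparser undirected algorithm into this pipeline cannot improve the bound below $n^{2+o(1)}$ without also improving the reduction.
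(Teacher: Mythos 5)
Your proposal is correct and matches the paper's argument exactly: the corollary is obtained by applying the reduction of Theorem~\ref{thm:intro_1} (formally Theorem~\ref{thm:reduction}) to produce a $2n$-vertex, $O(n^2)$-edge undirected instance and then running the $m^{1+o(1)}$-time algorithm of~\cite{LNPSY25} on it. Your closing observation about the $\Theta(n^2)$ density of the reduced instance is also consistent with the paper's own remark that the reduction is lossy for sparse graphs.
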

Thus, it is no longer necessary to open the box and mimic the proof of undirected vertex connectivity as done in \cite{LNPSY25}.  Furthermore, since the reduction is fully parallelizable, we also obtain parallel algorithms in this setting. Combining with the $n^{\omega+o(1)}$-time undirected vertex connectivity algorithm by \cite{BlikstadJMY25}, we obtain: 
\begin{corollary}
There is a directed vertex connectivity algorithm with $n^{\omega+o(1)}$-work  and $n^{o(1)}$-parallel depth.
\end{corollary}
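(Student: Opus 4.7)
The plan is to obtain this corollary as a direct composition of two ingredients: apply the reduction from \Cref{thm:intro_1} in parallel to convert the input directed graph into an undirected one, and then invoke the parallel undirected vertex connectivity algorithm of \cite{BlikstadJMY25} on that reduced graph. The two work/depth budgets should add, and the undirected solver should dominate the totals.

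First I would verify that the construction underlying \Cref{thm:intro_1} is itself parallelizable in the desired regime. Since the reduction produces an undirected graph on $2n$ vertices and $O(n^2)$ edges, and (as the informal statement suggests) each edge is determined by a local rule depending only on the adjacency relation of the input directed graph, the reduced graph can be built in a standard work--depth model with $O(n^2)$ work and $O(\log n)$ depth. These costs are absorbed by $n^{\omega+o(1)}$ work (since $\omega\ge 2$) and $n^{o(1)}$ depth, respectively. Next, I would feed the reduced graph into the algorithm of \cite{BlikstadJMY25}, whose parallel implementation is claimed to run in $n^{\omega+o(1)}$ work and $n^{o(1)}$ depth. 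Adding both contributions yields the stated bounds and recovers a minimum vertex cut of the original directed graph (via whatever mapping \Cref{thm:intro_1} provides, which is again a local, parallel-friendly operation).

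The main obstacle will be confirming that \cite{BlikstadJMY25} is stated (or at least immediately implies) a parallel implementation achieving $n^{o(1)}$ depth, rather than only a sequential $n^{\omega+o(1)}$-time bound. Their algorithm is algebraic, built from common-neighborhood clustering together with matrix multiplication and reachability primitives; matrix multiplication admits a polylogarithmic-depth parallel implementation, and common-neighborhood clustering should similarly fit within an $n^{o(1)}$-depth budget. I would therefore trace through each layer of their pipeline (random sampling of sources/terminals, the clustering step, the linear-algebra reachability test, and the recursive search for a minimum cut) to check that none of them introduces a sequential bottleneck. If any subroutine does not come with a suitable parallel statement, I would replace it by a known parallel counterpart with matching guarantees, since everything involved ultimately reduces to matrix-algebraic operations with well-understood parallel complexity.
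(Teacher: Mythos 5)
Your proposal matches the paper's argument exactly: the corollary is obtained by composing the fully parallelizable reduction of \Cref{thm:reduction} (which builds a $2n$-vertex, $O(n^2)$-edge undirected graph via purely local rules, hence in $O(n^2)$ work and polylogarithmic depth) with the parallel undirected vertex connectivity algorithm of \cite{BlikstadJMY25}, whose $n^{\omega+o(1)}$-work and $n^{o(1)}$-depth guarantees dominate the totals. Your caveat about verifying the parallel guarantees of \cite{BlikstadJMY25} is reasonable but unnecessary here, since that paper is explicitly cited as providing such a parallel implementation.
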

Work refers to the total amount of computation across all processors. Parallel depth refers to the length of the longest chain of dependent operations in an algorithm.

\paragraph{All-pairs minimum vertex cuts.}  The simplicity of our reduction facilitates extending the results to many other settings. We believe that our reduction paves the way to showing many problems related to vertex cut are \emph{equivalent} on dense undirected and directed graphs. The reduction simplifies another paper \cite{HuangLSW23}.

\begin{theorem}[Informal version of \Cref{thm:reduction}]\label{thm:intro_3}
    The all-pairs minimum vertex cuts problem on an $n$-vertex directed graph reduces to the same problem on a $2n$-vertex undirected graph with $O(n^2)$ edges.
\end{theorem}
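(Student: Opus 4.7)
The plan is to reuse the same construction used to establish \Cref{thm:intro_1} and argue that, although stated there only for global vertex connectivity, it in fact preserves the minimum $s$-to-$t$ vertex cut for \emph{every} ordered pair $(s,t)\in V\times V$ simultaneously. If so, a single call to an all-pairs undirected min-vertex-cut oracle on the constructed undirected graph $H$ suffices to recover every ordered-pair directed vertex cut in the input $G$.

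Concretely, let $H$ be the $2n$-vertex, $O(n^2)$-edge undirected graph produced by the reduction of \Cref{thm:intro_1}, which associates to each $v\in V$ two distinguished copies $v_\tout,v_\tin\in V(H)$. The heart of that reduction is a local, vertex-by-vertex gadget that converts any collection of internally vertex-disjoint directed $s$-to-$t$ paths in $G$ into an equally large collection of internally vertex-disjoint undirected $s_\tout$-to-$t_\tin$ paths in $H$, and vice versa. The first step is to check that this gadget-level correspondence is completely pair-oblivious: the local structure installed at each vertex makes no reference to the identities of $s$ or $t$. By Menger's theorem (in its directed vertex-cut form on $G$ and its undirected vertex-cut form on $H$), this yields
\[
\kappa_G^{\mathrm{dir}}(s,t)\;=\;\kappa_H^{\mathrm{und}}(s_\tout,t_\tin)
\]
for every ordered pair $(s,t)\in V\times V$, using a single shared graph $H$.

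Given this pair-wise equality, the all-pairs reduction is immediate. Invoke the all-pairs undirected min-vertex-cut oracle on $H$; for each ordered $(s,t)\in V\times V$ with $s\neq t$, read off the cut value (and, if desired, the cut set itself, mapped back via $v_\tout,v_\tin\mapsto v$) returned for the undirected pair $\{s_\tout,t_\tin\}$. This produces all $n(n-1)$ directed ordered-pair minimum vertex cuts from a single oracle call on a $2n$-vertex, $O(n^2)$-edge undirected graph, which matches the stated bounds.

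The main obstacle I anticipate is the handling of boundary pairs---those $(s,t)$ for which $s_\tout$ and $t_\tin$ are adjacent in $H$, where the undirected minimum vertex cut is classically undefined. I would verify that the construction behind \Cref{thm:intro_1} makes $s_\tout$ and $t_\tin$ adjacent in $H$ precisely when $s=t$ or $(s,t)\in E$, which are exactly the directed boundary cases (where $\kappa_G^{\mathrm{dir}}(s,t)$ is either undefined or trivially $n-1$). Matching these trivial cases on both sides then gives a clean, exception-free all-pairs reduction.
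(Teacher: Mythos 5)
Your high-level skeleton matches the paper's: establish a per-pair identity between $\kappa_G(s,t)$ and $\kappa_{G'}(s_\tout,t_\tin)$ for a single shared graph $G'$, then make one call to the all-pairs undirected oracle and read off the answers (your treatment of the boundary pairs, where $s_\tout$ and $t_\tin$ are adjacent exactly when $s=t$ or $(s,t)\in E$, is also correct). But the key step is wrong as stated. The paper's $2n$-vertex construction places a clique on $V_\tout$, a clique on $V_\tin$, a perfect matching between them, and the edges $\{u_\tout,v_\tin\}$ for $(u,v)\in E$. Because of the cliques and the matching, \emph{every} $s_\tout$--$t_\tin$ separator in $G'$ must contain all of $V_\tin\setminus R_\tin$ and all of $R_\tout$, so the correct identity is $\kappa_{G}(s,t)=\kappa_{G'}(s_\tout,t_\tin)-w(V)$ (Lemma 3.6 in the paper), not the exact equality $\kappa_G(s,t)=\kappa_H(s_\tout,t_\tin)$ you assert. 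Equivalently, a maximum family of internally disjoint $s_\tout$--$t_\tin$ paths in $G'$ has exactly $n$ (or weight $w(V)$) \emph{more} paths than the directed maximum in $G$, so there is no size-preserving bijection of disjoint path families.

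More importantly, the ``local, vertex-by-vertex gadget that converts disjoint directed $s$-to-$t$ paths into equally many disjoint undirected $s_\tout$-to-$t_\tin$ paths'' that you invoke as the heart of the reduction describes the classical \emph{directed} split graph, which does not produce an undirected instance; making the graph undirected is precisely where the difficulty lies, and no such exactly-path-preserving local gadget is known (this is why the result is considered surprising). The paper's argument is instead global and cut-based: it first shows that in $G'$ every vertex cut has one side inside $V_\tout$ and the other inside $V_\tin$ (because the two cliques partition $V'$), then computes $w'(N_{G'}(R_\tin))=w(N_G^\tin(R))+w(V)$ directly, and deduces the offset identity in both directions. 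Your proposal is missing both the correct offset and the structural argument (the clique/partition analysis) that justifies the correspondence; with those supplied, the rest of your plan goes through.
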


In \cite{AbboudGIKPTUW19}, they proved that all-pair minimum vertex cut has a conditional lower bound of $\widehat \Omega (n^4)$ for combinatorial algorithms on directed graphs. One of the main contributions of \cite{HuangLSW23} was to extend this lower bound to undirected graphs, which takes 7 pages of proofs. Our theorem \Cref{thm:intro_3} immediately implies the same result when combining with \cite{AbboudGIKPTUW19}.%

\paragraph{Weighted vertex mincuts.} The problem of vertex connectivity naturally extends to vertex-weighted graphs. The problem of (weighted) minimum vertex cut is given an undirected (or directed) graph with vertex weights, which asks to find a minimum vertex cut with the smallest weight, where the weight of a vertex set is the sum of all the weights of the vertices in this set. This problem turns out to be much more challenging than the unweighted (or unit-weighted) case of vertex connectivity. The state-of-the-art result is by \cite{ChuzhoyT25}, which gives a $\min(mn^{0.99+o(1)},m^{1.5+o(1)})$ algorithm for minimum vertex cut on undirected graphs where $m$ denotes the number of edges. Although it is plausible that their techniques can be extended to directed graphs, this was not discussed in their paper, leaving the $\tO{mn}$\footnote{We use $\tO{\cdot }$ to hide polylogarithmic factors.} time algorithm of \cite{HenzingerRG00} as the best known result for directed graphs.

Our reduction \Cref{thm:intro_1} also holds on weighted graphs. Strictly speaking, we have the following result. 

\begin{theorem}[Informal version of \Cref{thm:reduction}]\label{thm:intro_2}
   The directed weighted minimum vertex cut problem on an $n$-vertex directed graph reduces to the undirected weighted minimum vertex cut problem on a graph with $2n$ vertices and $O(n^2)$ edges.
\end{theorem}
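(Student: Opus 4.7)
I would construct the undirected graph $G'$ by vertex-splitting: the vertex set $V(G') = V_{\text{in}} \cup V_{\text{out}}$ consists of two copies $v_{\text{in}}, v_{\text{out}}$ of each $v \in V(G)$, and the edge set contains a ``gadget'' edge $\{v_{\text{in}}, v_{\text{out}}\}$ for each $v \in V(G)$ together with a ``forward'' edge $\{u_{\text{out}}, v_{\text{in}}\}$ for each directed edge $(u, v) \in E(G)$. This yields $|V(G')| = 2n$ and $|E(G')| \leq n + |E(G)| = O(n^2)$ in the dense regime. The vertex weights on $G'$ are set by $w'(v_{\text{in}}) = w'(v_{\text{out}}) = w(v)$. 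The main claim is that the minimum weighted undirected vertex cut of $G'$ coincides with the minimum weighted directed vertex cut of $G$ (up to an easily computable correction term, such as the smallest vertex weight), giving the reduction.

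For the \emph{soundness} direction, I would consider an arbitrary undirected vertex cut $C' \subseteq V(G')$ and show the projection $C = \{v \in V(G) : v_{\text{in}} \in C' \text{ or } v_{\text{out}} \in C'\}$ yields a directed vertex cut of $G$ with $w(C) \leq w'(C')$. The argument exploits the gadget edges: for any $v \notin C$, both copies $v_{\text{in}}, v_{\text{out}}$ survive in $V(G') \setminus C'$, and the gadget edge forces them to lie on the same side of the partition $A \sqcup B = V(G') \setminus C'$. Combined with the forward edges $\{u_{\text{out}}, v_{\text{in}}\}$, this implies that no directed edge of $G$ crosses the induced partition $(V_A, V_B)$ of $V(G) \setminus C$ in one direction, which certifies that $G \setminus C$ is not strongly connected.

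For the \emph{completeness} direction, starting from a directed min cut $C$ of $G$ together with a witness partition $(V_A, V_B)$ of $V(G) \setminus C$ (with no directed edges from $V_A$ to $V_B$), I would build the undirected cut $C' \subseteq V(G')$ by placing appropriate copies of each $v \in C$ into $C'$ and assigning the remaining copies of $v \in V_A \cup V_B$ to $A$ or $B$ according to $v$'s side. The \textbf{main obstacle} is the weight accounting: a vertex $v \in C$ can simultaneously have an incoming edge from one side of the partition and an outgoing edge to the other, which naively forces including \emph{both} $v_{\text{in}}$ and $v_{\text{out}}$ in $C'$, doubling the contribution from $v$ to $2w(v)$. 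I would resolve this by restricting attention to a canonical min cut---for instance, the ``left-most'' min cut defined via reachability in the residual-graph sense---whose structural properties guarantee that each $v \in C$ has a clean role (in-neighbors in $V(G) \setminus C$ on one side, out-neighbors on the opposite side), so that a single copy of $v$ in $C'$ suffices, yielding $w'(C') = w(C)$ and closing the reduction.
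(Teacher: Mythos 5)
Your construction is missing the key ingredient of the paper's reduction, and this omission is fatal. The paper's graph $G'$ is not just the undirected split graph (matching edges $\{v_{\text{out}},v_{\text{in}}\}$ plus forward edges $\{u_{\text{out}},v_{\text{in}}\}$): it additionally places a \emph{clique on all of $V_{\text{out}}$ and a clique on all of $V_{\text{in}}$}, and the correction term is the \emph{total} weight $w(V)$, not the smallest vertex weight. The cliques force every vertex cut $(L',S',R')$ of $G'$ to satisfy $L'\subseteq V_{\text{out}}$ and $R'\subseteq V_{\text{in}}$ (each clique must lie entirely in $L'\cup S'$ or $R'\cup S'$), after which $S'=N_{G'}(R_{\text{in}})$ decomposes exactly as $(V_{\text{in}}\setminus R_{\text{in}})\cup R_{\text{out}}\cup N_G^{\text{in}}(R)_{\text{out}}$, giving the clean identity $\kappa_{G'}=\kappa_G+w(V)$. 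Without the cliques, cuts in $G'$ can mix in-copies and out-copies arbitrarily and do not track directed cuts of $G$. Concrete counterexample: let $G$ be the directed triangle $a\to b\to c\to a$ with $w(a)=1$, $w(b)=w(c)=10$. Then $\kappa_G=1$, but your $G'$ is an undirected $6$-cycle whose minimum weight vertex cut is $11$ (two non-adjacent vertices, the cheapest pair being $a_{\text{in}},b_{\text{in}}$); rescaling $w(b)$ to $5$ leaves $\kappa_G=1$ but changes your $\kappa_{G'}$ to $6$, so no uniform additive (or otherwise easily computable) correction exists.

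Both directions of your argument reflect this gap. In the soundness direction, the projected set $C$ need not be a directed vertex cut at all: one side $A$ of the undirected partition can consist of a single copy $v_{\text{out}}$ with $v_{\text{in}}\in C'$, in which case the induced set $V_A$ is empty. In the completeness direction, the doubling obstacle you identify is real and cannot be repaired by passing to a canonical (e.g.\ leftmost) minimum cut: a separator vertex $v\in S$ can have bidirected edges to some $\ell\in L$ and to some $r\in R$ simultaneously, and such a cut can be the unique minimum cut, so whichever single copy of $v$ you keep is adjacent to both sides and both copies must enter $C'$. The paper never faces this issue because the additive $w(V)$ term already pays for one copy of \emph{every} vertex of $V$, so the cut $N_{G'}(R_{\text{in}})$ absorbs both copies of separator vertices for free. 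Finally, note that your sparser $O(n+m)$-edge construction, if it worked, would reduce directed to undirected vertex connectivity on sparse graphs --- exactly the open problem the paper highlights in its concluding remarks; the $\Theta(n^2)$ edge blowup from the two cliques is the price the paper pays and is essential to the argument.
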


By plugging in \Cref{thm:intro_2} with the $O(\min(mn^{0.99+o(1)},m^{1.5+o(1)}))$ undirected minimum vertex cut algorithm \cite{ChuzhoyT25}, we immediately get the following corollary that improves the $\tO{mn}$ algorithm \cite{HenzingerRG00} from 30 years ago on dense graphs.

\begin{corollary}\label{cor:intro_dirVC}
    There is an $n^{2.99+o(1)}$ time randomized algorithm for directed minimum vertex cut.
\end{corollary}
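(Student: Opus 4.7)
The plan is to obtain Corollary \ref{cor:intro_dirVC} as an immediate consequence of black-box composition: feed the reduction of Theorem \ref{thm:intro_2} into the recent subcubic weighted undirected vertex connectivity algorithm of Chuzhoy and Tan \cite{ChuzhoyT25}. Concretely, given an $n$-vertex vertex-weighted directed graph $G$, I would first invoke the reduction to produce an undirected vertex-weighted graph $G'$ with $n' = 2n$ vertices and $m' = O(n^2)$ edges whose minimum weighted vertex cut equals that of $G$. Constructing $G'$ takes $O(n^2)$ time, which is negligible compared to the target bound, and if a cut (not just its value) is required, it can be lifted back through the gadget in linear time.

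Next I would run the Chuzhoy--Tan algorithm on $G'$, whose stated running time is $O\bigl(\min(m'(n')^{0.99+o(1)},\ (m')^{1.5+o(1)})\bigr)$. Substituting $n' = 2n$ and $m' = O(n^2)$, the first branch gives $O(n^2)\cdot (2n)^{0.99+o(1)} = n^{2.99+o(1)}$, while the second branch gives $(n^2)^{1.5+o(1)} = n^{3+o(1)}$, which is strictly worse. Taking the minimum yields the claimed $n^{2.99+o(1)}$ bound, and the whole procedure is randomized because \cite{ChuzhoyT25} is.

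There is essentially no analytical obstacle here; the only point requiring a moment's care is that the reduction of Theorem \ref{thm:intro_2} must genuinely handle \emph{vertex weights}, preserving the minimum weighted vertex cut value and producing a graph whose density is still $O(n^2)$ rather than, say, blowing up by a factor depending on the weights. Once that is granted (which is exactly the content of the informal statement of Theorem \ref{thm:intro_2}, to be formalized as Theorem \ref{thm:reduction}), the corollary follows by the two-line arithmetic above, with no further algorithmic ingredients needed.
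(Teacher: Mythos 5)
Your proposal is correct and matches the paper's argument exactly: the corollary is obtained by applying the reduction of Theorem \ref{thm:reduction} to produce a $2n$-vertex, $O(n^2)$-edge undirected weighted instance and then running the $O(\min(mn^{0.99+o(1)}, m^{1.5+o(1)}))$ algorithm of \cite{ChuzhoyT25}, with the first branch giving $n^{2.99+o(1)}$. Your arithmetic and the observation that the reduction handles vertex weights are exactly what the paper relies on.
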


\paragraph{Remark.} Our construction extends to Steiner vertex cut problems as well, but this does not immediately simplify the construction of \cite{HuangLSW23} because there was no known conditional lower bound for the directed Steiner vertex cut problem before their work.

\paragraph{Beyond the sequential model.} We believe the simplicity of our reduction will make it easy to extend to other computation models, including communication, distributed, or streaming models~\cite{BlikstadJMY25}. These models are especially suitable for our reduction theorem because many of them use $n$ as the parameter. However, individual implementation involves delicate technical details of the reduction that are beyond the scope of this paper.  %

\paragraph{Comparison to Edge Connectivity.} 
Edge connectivity asks for the minimum number of edges whose removal disconnects an undirected graph or breaks the strong connectivity of a directed graph. Although this problem appears very similar to vertex connectivity, there is \emph{no} known reduction between undirected and directed edge connectivity, making our result even more surprising.
In fact, directed edge connectivity is generally considered harder to handle than its undirected counterpart. It has long been known that undirected edge connectivity can be solved in nearly linear time using (relatively) simple combinatorial techniques~\cite{Karger00}. In contrast, directed edge connectivity implies solving \emph{unit-capacity maximum flow}, which was only recently shown to be solvable in almost linear time even on dense graphs~\cite{BrandLLSS0W21,ChenKLPGS25}, using sophisticated continuous optimization methods combined with complex data structures.

\paragraph{Independent Work.}  Independently of our work, the running time of the new weighted vertex connectivity algorithms have been improved to  $O\left(\min\{m n^{11/12+o(1)}\cdot d^{1/12}, n^{2.677}\}\cdot (\log W)^{O(1)}\right ) \leq O\left(m n^{0.976}\cdot (\log W)^{O(1)}\right)$, where $d$ is the average vertex degree in of the input directed graph and $W$ is the ratio between the largest and smallest vertex weight~\cite{ChuzhoyMosenzonTrabelsi2026}. They design algorithms that work on directed graphs, whereas our results are based on a black-box reduction to undirected graphs. Furthermore,  they also show an algorithm for unweighted directed graphs with running time of $ O(\min\{m^{1+o(1)} k,n^{2+o(1)}\})$ where $k$ is the vertex connectivity.

\section{Preliminaries}
\newcommand{\NoG}{N^\text{out}_G}
\newcommand{\NiG}{N^\text{in}_G} 

In this paper, we consider both undirected and directed graphs $G=(V,E)$. 
For undirected graphs, we write the edge set $E \subseteq \binom{V}{2}$ as a set of two-element subsets of $V$ and we use $N_G(u)\subseteq V$ to denote the neighborhood set of $u \in V$ in $G$. 
For directed graphs, we write the edge set $E \subseteq V \times V$ as a set of ordered pairs of $V$ and we use $\NoG(u),\NiG(u)$ to denote the out- and in-neighbors of $u \in V$ in $G$. 
When $G$ is vertex-weighted, each vertex is assigned a weight $w_G(v)\in \bbZ^+$ (when $G$ is clear from the context, we omit $G$ and write $w(v)$). If each vertex is assigned weight $1$, then the weight of a vertex set is simply its cardinality.

\paragraph{Vertex cut.} A \emph{vertex cut} $(L,S,R)$ is a tri-partition of the vertex set $V$ such that there are no edges from $L$ to $R$ and both $L$ and $R$ are non-empty sets. The vertex set $S$ is called a \emph{separator} (following the literature, we also refer to it as a \emph{vertex cut}). The size of the vertex cut is $|S|$ and the weight of the vertex cut is $w(S):=\sum_{v\in S}w(v)$. If $s\in L$ and $t\in R$, then we call $(L,S,R)$ (or $S$) a $(s,t)$-vertex cut. It is an $s$-vertex cut if $s \in L$.  Throughout this paper, we assume that every input graph is non-complete and thus a vertex cut always exists. 
\paragraph{Minimum vertex cut.} The problem of \emph{minimum vertex cut} asks to find a vertex cut $S$ that minimizes the weight $w(S)$ (in the case of a weighted graph) or the size $|S|$ (in the case of an unweighted graph). The latter case is also referred to as \emph{vertex connectivity}.
\paragraph{Single-source minimum vertex cut.} The problem of \emph{single source minimum vertex cut} asks, given a graph and a source $s\in V$, to find an $s$-vertex cut $(L,S,R)$ that minimizes $w(S)$ (in the case of a weighted graph) or the size $|S|$ (in the case of an unweighted graph). We denote $\kappa_G(s) = w(S)$ or $|S|$ in the case of an unweighted graph.
\paragraph{All-pairs minimum vertex cut.} The problem of \emph{all-pairs minimum vertex cut} asks, given a graph $G$, to find a minimum vertex cut $(L,S,R)$ such that $s\in L, t\in R$, for every pair of nodes $(s,t)\in V$ (or report that such cut does not exist). We denote $\kappa_G(s,t) = w(S)$ or $|S|$ in the case of an unweighted graph. 
\paragraph{Steiner minimum vertex cut.} The problem of \emph{Steiner minimum vertex cut} asks, given a graph $G$ and a terminal set $T \subseteq V$, to find a minimum vertex cut $(L,S,R)$ such that $T \cap L \neq \emptyset$ and $T \cap R \neq \emptyset$, that is, a minimizer of $\kappa_G(s,t)$ over all pairs $s,t \in T$. 

\section{A Reduction from Directed to Undirected Vertex Connectivity}

We first state our main result and then describe the reduction. 

\begin{theorem} \label{thm:reduction}
    Given a vertex-weighted directed graph $G=(V,E,w)$ with $|V|=n$, there is a vertex-weighted undirected graph $G'=(V',E',w')$ with $2n$ vertices and $O(n^2)$ edges such that the following directed vertex cut variants in $G'$ can be solved using an algorithm for the undirected problem in $G$: global, all-pairs, single-source/sink, Steiner. For the Steiner variant, the terminal set $T'$ given as input to the undirected problem has size $2|T|$, where $T$ is the terminal set given as input to the directed problem.  
\end{theorem}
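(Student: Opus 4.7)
My plan is to construct $G'$ explicitly and exhibit a weight-preserving correspondence between directed cuts of $G$ and undirected cuts of $G'$, up to a uniform additive term of $w(V)$. I will set $V' = V^+ \cup V^-$ to be two disjoint copies of $V$ with weights $w'(v^+) = w'(v^-) = w_G(v)$, and let $E'$ consist of (i) a clique on $V^+$ and a clique on $V^-$, (ii) the \emph{pairing} edges $\{v^+, v^-\}$ for each $v \in V$, and (iii) for each directed edge $(u, v) \in E$, an undirected edge $\{u^+, v^-\}$. This graph has $2n$ vertices and $O(n^2)$ edges.

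The crux will be the following structural observation: since $V^+$ and $V^-$ are cliques, for any separator $S' \subseteq V'$, each of $V^+ \setminus S'$ and $V^- \setminus S'$ lies in a single connected component of $G' - S'$; hence $G' - S'$ is disconnected exactly when both are nonempty and lie in different components. Moreover, the pairing edges force that no $v$ has both $v^+ \notin S'$ and $v^- \notin S'$, since otherwise the edge $\{v^+, v^-\}$ would fuse the two components.

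Using this, I would give two inverse maps. From a directed cut $(L, S, R)$ of $G$, form $S' := L^- \cup R^+ \cup S^+ \cup S^-$; the clique and pairing edges stay within $L^+$ or within $R^-$, and any crossing edge-mapped edge $\{u^+, v^-\}$ would encode a forbidden $L \to R$ edge, giving weight $w(V) + w(S)$. Conversely, from a separator $S'$ of $G'$, set $A := \{v : v^+ \notin S'\}$ and $B := \{v : v^- \notin S'\}$; the structural observation gives $A \cap B = \emptyset$, no edge from $A$ to $B$ in $G$ can survive (else the corresponding $\{u^+, v^-\}$ crosses), and $(A, V \setminus (A \cup B), B)$ is a directed cut of weight $w'(S') - w(V)$. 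Combining both directions yields the global case of the theorem.

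The same correspondence transports to every listed variant: for single-source with source $s$ (respectively single-sink with sink $t$), I solve the undirected single-source problem from $s^+$ (respectively $t^-$); for all-pairs, directed $(s,t)$-cuts pair with undirected $(s^+, t^-)$-cuts; and for Steiner I set $T' := T^+ \cup T^-$, so that $|T'| = 2|T|$ and the pairing forbids $t^+$ and $t^-$ from landing in opposite components, making the undirected Steiner condition equivalent to having two distinct terminals of $T$, one in $A$ and one in $B$. I expect the main subtlety to be ensuring that no spurious undirected cut beats $w(V) + \kappa_G$: the clique gadgets force any cut to respect the bipartition $(V^+, V^-)$, and the pairing edges prevent a vertex from being counted ``for free'' on both sides. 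Without either ingredient, cheaper undirected separators arise that do not correspond to any directed cut, so both are essential.
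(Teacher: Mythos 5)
Your construction is identical to the paper's (your $V^+,V^-$ are its $V_{\text{out}},V_{\text{in}}$, with the same cliques, matching, and embedded directed edges), and your two maps between directed cuts of $G$ and separators of $G'$ with the uniform $w(V)$ weight shift, together with the same choice of terminals $s^+$, $t^-$, and $T^+\cup T^-$ for the variants, are exactly the paper's argument. The proposal is correct and takes essentially the same approach; the only cosmetic difference is that you bookkeep via the complement sets $A,B$ of the separator rather than via the paper's explicit neighborhood computation $w'(N_{G'}(R_{\text{in}}))=w(N_G^{\text{in}}(R))+w(V)$.
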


\paragraph{Construction. }
Consider a vertex-weighted directed graph $G = (V, E, w)$. We define the vertex-weighted undirected graph $G' = (V', E', w')$ as follows. We create two copies $V_\text{out}$ and $V_\text{in}$ of $V$ and connect all vertices within each copy, creating two cliques of size $n$. We then add a perfect matching connecting each vertex with its copy. Finally, we add the directed edges in $G$ from copy $V_\tout$ to copy $V_\tin$. Formally, we have
\begin{itemize}
    \item $V' = V_\text{out} \cup V_\text{in}$, where $V_\text{out} = \{v_\text{out} : v \in V\}$ and $V_\text{in} = \{v_\text{in} : v \in V\}$ are two copies of $V$;
    \item $E' = E_\text{out} \cup E_\text{in} \cup E_\text{matching} \cup E_{\text{out} \rightarrow \text{in}} $, where 
    \begin{itemize}
        \item $E_\text{out} = \binom{V_\text{out}}{2}$ and $E_\text{in} = \binom{V_\text{in}}{2}$ are the cliques over $V_\tout$ and $V_\tin$, respectively,
        \item $E_\text{matching} = \{ \{v_\tout, v_\tin\} : v \in V\}$ is the perfect matching between $V_\tout$ and $V_\tin$, and
        \item $E_{\text{out} \rightarrow \text{in}} = \{ \{ u_\tout, v_\tin \} : (u, v) \in E \}$ are the directed edges from $G$ embedded in $G'$;
    \end{itemize}
    \item $w'(v_\tout) = w'(v_\tin) = w(v)$ for all $v \in V$.
\end{itemize}

This construction should be contrasted with the standard \emph{split graph construction}~\cite{DantzigFulkerson1956} where we split every vertex $v \in V$ to $v_{\text{out}}$ and $v_{\text{in}}$ and add appropriate edges, which was known since 1956; this construction has been used in many places including  max-flow based algorithms (e.g., see~\cite{HenzingerRG00,Gabow06,NanongkaiSY19,ForsterNYSY20,LNPSY25}). In other words, the graph $G'$ is the split graph $SG$ after removing directions and adding cliques on $V_{\text{out}}$ and another one on $V_{\text{in}}$, see \Cref{fig:example} for illustration.

\begin{figure}[h] %
    \centering
    \includegraphics[width=0.8\textwidth]{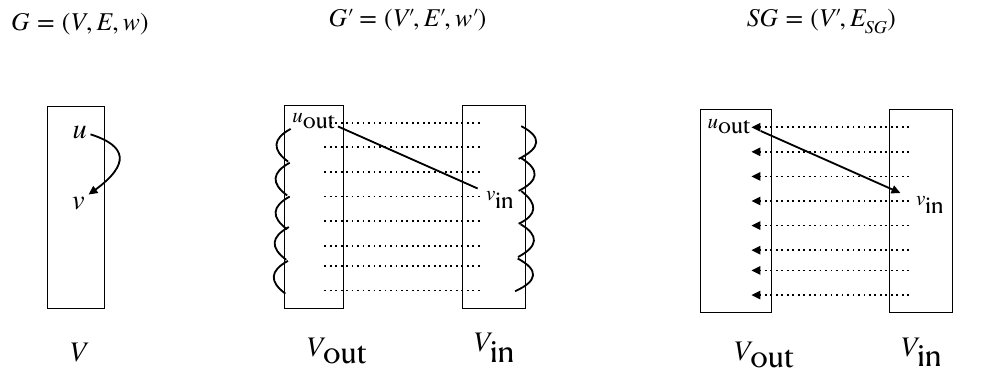} %
    \caption{ (Left) An input graph $G = (V,E,w)$. (Middle) The output graph $G' = (V',E',w')$. (Right) The split graph construction (ignoring weights) $SG$ given $G$.  The split graph usually has the edge weight as capacity in one direction; our graph has vertex weights and no edge weights. }
    \label{fig:example}
\end{figure}

\subsection{Basic Facts}

We first make two simple observations about vertex cuts in $G'$. 
\begin{proposition}\label{prop:cuts-in-G'}
For every vertex cut $(L', S', R')$ in $G'$, we either have $L' \subseteq V_\text{out}, R' \subseteq V_\text{in}$ or $L' \subseteq V_\text{in}, R' \subseteq V_\text{out}$.
\end{proposition}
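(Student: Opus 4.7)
The plan is to argue by a short case analysis exploiting the two cliques built into $G'$. The key observation is that since $V_\text{out}$ induces a clique in $G'$, and a vertex cut by definition has no edges between $L'$ and $R'$, the set $V_\text{out}$ cannot contribute vertices to both $L'$ and $R'$; the same argument applied to the clique on $V_\text{in}$ gives the analogous statement for $V_\text{in}$.

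More precisely, I would first establish the two ``clique constraints'': at least one of $L' \cap V_\text{out}$ and $R' \cap V_\text{out}$ is empty (otherwise pick $u_\text{out} \in L' \cap V_\text{out}$ and $v_\text{out} \in R' \cap V_\text{out}$; the clique edge $\{u_\text{out}, v_\text{out}\} \in E_\text{out}$ contradicts the definition of a vertex cut), and symmetrically at least one of $L' \cap V_\text{in}$ and $R' \cap V_\text{in}$ is empty. Then I would combine these two constraints into four cases. Two of the cases force $L' = \emptyset$ or $R' = \emptyset$, contradicting that $(L', S', R')$ is a vertex cut (which requires both $L'$ and $R'$ to be non-empty). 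The remaining two cases yield exactly the two alternatives in the statement: $L' \subseteq V_\text{out}, R' \subseteq V_\text{in}$, or $L' \subseteq V_\text{in}, R' \subseteq V_\text{out}$.

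The proof should be only a few lines; I do not anticipate any real obstacle. The only subtlety to flag is that the argument uses only the clique edges $E_\text{out}$ and $E_\text{in}$ and the fact that $L'$ and $R'$ must be non-empty; the matching edges and the embedded directed edges $E_{\text{out}\to\text{in}}$ are not needed here (they will matter when relating cuts in $G'$ to cuts in $G$, in a subsequent proposition).
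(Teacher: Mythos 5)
Your proposal is correct and follows essentially the same route as the paper: both arguments rest on the observation that a clique cannot meet both $L'$ and $R'$, combined with the non-emptiness of $L'$ and $R'$ and the fact that $V_\text{out}, V_\text{in}$ partition $V'$. Your version merely spells out the four-case analysis that the paper leaves implicit.
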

\begin{proof}
    There are no edges in $G'$ between $L'$ and $R'$, so each clique must be a subset of $L' \cup S'$ or $R' \cup S'$. As $V_\text{out}$ and $V_\text{in}$ are cliques and form a partition of the vertex set $V'$, the claim follows. 
\end{proof}

\begin{proposition}\label{prop:cuts-wlog}
Without loss of generality, we can write every vertex cut in $G'$ as $(L', S', R')$ such that $L' \subseteq V_\text{out}$ and $R' \subseteq V_\text{in}$. Moreover, if we put $L = \{ v \in V : v_\tout \in L' \}$ and $R = \{ v \in V : v_\tin \in R' \}$, then the sets $L$ and $R$ are nonempty and disjoint, and there is no edge $(u,v) \in E$ such that $u \in L$ and $v \in R$. 
\end{proposition}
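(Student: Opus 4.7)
The plan is to chain \Cref{prop:cuts-in-G'} with two simple observations: that $G'$ is undirected (so the two sides of a vertex cut are interchangeable), and that each edge in $E_\text{matching}$ and each edge in $E_{\text{out} \rightarrow \text{in}}$ individually forbids a specific $L'$-to-$R'$ crossing in $G'$.

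First I would invoke \Cref{prop:cuts-in-G'} to reduce to the case $L' \subseteq V_\tout$ and $R' \subseteq V_\tin$: in the other case supplied by that proposition, $(R', S', L')$ is again a vertex cut in $G'$ with the same separator $S'$ (since $G'$ is undirected), so after relabelling we may always assume the first arrangement. This yields the ``without loss of generality'' form of the statement. Now let $L$ and $R$ be as defined in the proposition. Nonemptiness of $L$ (resp.\ $R$) follows from nonemptiness of $L'$ (resp.\ $R'$) combined with $L' \subseteq V_\tout$ (resp.\ $R' \subseteq V_\tin$). For disjointness, if some $v$ lay in $L \cap R$ then $v_\tout \in L'$ and $v_\tin \in R'$, and the matching edge $\{v_\tout, v_\tin\} \in E_\text{matching}$ would cross from $L'$ to $R'$, contradicting that $(L', S', R')$ is a vertex cut. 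To rule out an edge $(u,v) \in E$ with $u \in L$ and $v \in R$, I would apply exactly the same argument with the embedded edge $\{u_\tout, v_\tin\} \in E_{\text{out} \rightarrow \text{in}}$ in place of the matching edge.

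There is no genuine obstacle here; the whole proposition is a careful translation between cut structure in $G'$ and cut-like structure in the original directed graph $G$. The only thing to verify is that each of the four edge subsets of $E'$ plays its intended role: the two cliques $E_\tout, E_\tin$ forced the partition shape given by \Cref{prop:cuts-in-G'}, while $E_\text{matching}$ forces $L \cap R = \emptyset$ and $E_{\text{out} \rightarrow \text{in}}$ forces the no-edge condition on $G$.
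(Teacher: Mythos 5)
Your proof is correct and follows essentially the same route as the paper: reduce to the stated arrangement via \Cref{prop:cuts-in-G'} and undirectedness, then use a matching edge to force $L \cap R = \emptyset$ and an embedded edge of $E_{\text{out} \rightarrow \text{in}}$ to rule out a crossing edge of $E$. You even make explicit the (trivial) nonemptiness of $L$ and $R$, which the paper leaves implicit.
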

\begin{proof}
The first part immediately follows from Proposition \ref{prop:cuts-in-G'} and the fact that $G'$ is undirected, so $L'$ and $R'$ can be swapped. To see the second part, assume for contradiction that there is some vertex $v \in L \cap R$, that is, $v_\tout \in L'$ and $v_\tin \in R'$. Since we have the perfect matching edge $\{v_\tout, v_\tin\} \in E_{\text{id}}$ in $G'$, this contradicts the fact that $(L', S', R')$ is a cut. Now assume for contradiction that there is some edge $(u,v) \in E$ such that $u \in L$ and $v \in R$. Again, this contradicts the fact that $(L', S', R')$ is a cut as there is the edge $\{u_\tout, v_\tin\} \in E_{\tout \rightarrow \tin}$ in $G'$.
\end{proof}

Next, we show the key property of our construction. 
\begin{lemma}\label{lemma:neighbors-G'-G}
Let $\emptyset \neq R_\tin \subseteq V_\text{in}$ and let $R = \{ v \in V : v_\tin \in R_\tin \}$. We have $w'(N_{G'}(R_\tin)) = w(N_G^\tin (R)) + w(V)$.
\end{lemma}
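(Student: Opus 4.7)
The plan is to directly compute $N_{G'}(R_\tin)$ by splitting it according to which half of $V'$ a neighbor lies in, and then to match the weight contributions against $w(N_G^\tin(R)) + w(V)$. Concretely, since $V' = V_\tout \sqcup V_\tin$, I would write
\[
N_{G'}(R_\tin) = \bigl(N_{G'}(R_\tin) \cap V_\tin\bigr)\, \cup\, \bigl(N_{G'}(R_\tin) \cap V_\tout\bigr),
\]
and analyze the two pieces using the description of $E'$ from the construction.

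For the $V_\tin$-side, because $E_\tin = \binom{V_\tin}{2}$ makes $V_\tin$ a clique, every vertex of $V_\tin \setminus R_\tin$ is adjacent to some (in fact every) vertex of $R_\tin$, and of course no vertex of $R_\tin$ itself is counted as a neighbor. So this piece contributes exactly $w'(V_\tin) - w'(R_\tin) = w(V) - w(R)$.

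For the $V_\tout$-side, the only edges from $V_\tout$ to $V_\tin$ come from $E_\text{matching}$ and $E_{\tout \to \tin}$. A vertex $u_\tout$ is linked to some $v_\tin \in R_\tin$ by a matching edge iff $u = v \in R$, and by an $E_{\tout \to \tin}$ edge iff $(u,v) \in E$ for some $v \in R$, i.e., $u$ is an in-neighbor in $G$ of a vertex of $R$. Thus the set of $u$'s with $u_\tout \in N_{G'}(R_\tin)$ equals $R \cup N_G^\tin(R)$. Adopting the standard convention that $N_G^\tin(R)$ excludes $R$ (the only convention under which the stated identity can possibly hold), this is a disjoint union, contributing weight $w(R) + w(N_G^\tin(R))$.

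Adding the two pieces gives $w(V) - w(R) + w(R) + w(N_G^\tin(R)) = w(V) + w(N_G^\tin(R))$, which is exactly the claim. There is no real obstacle; the only subtlety is the book-keeping around whether $R$ is included in $N_G^\tin(R)$, so I would open the proof by fixing that convention so that the cancellation $-w(R) + w(R)$ between the $V_\tin$-clique contribution and the matching-edge contribution is visibly correct.
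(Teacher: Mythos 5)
Your proof is correct and takes essentially the same approach as the paper: the paper also splits $N_{G'}(R_\tin)$ into the clique part $V_\tin \setminus R_\tin$ (weight $w(V\setminus R)$), the matching part $R_\tout$ (weight $w(R)$), and the directed-edge part $N_{G'}(R_\tin)\cap(V_\tout\setminus R_\tout) = N_G^\tin(R)$, and sums them. Your handling of the convention that $N_G^\tin(R)$ excludes $R$ matches the paper's implicit use of the same convention.
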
 

\begin{proof}
We proceed by splitting the neighborhood of $R_\tin$ into its parts within $V_\tin$, $R_\tout$ and $V_\tout \setminus R_\tout$, where $R_\tout = \{ v_\tout : v_\tin \in R_\tin \}$. 
First, we consider $N_{G'}(R_\tin) \cap V_\tin$. As $V_\tin$ is a clique, we have $N_{G'}(R_\tin) \cap V_\tin = V_\tin \setminus R_\tin$ and hence $w'(N_{G'}(R_\tin) \cap V_\tin) = w'(V_\tin \setminus R_\tin) = w(V \setminus R)$. 
Second, we consider $N_{G'}(R_\tin) \cap R_\tout$. As the perfect matching edges $E_\text{matching}$ connect each $v_\tin \in R_\tin$ with its counterpart $v_\tout$, this set is exactly $R_\tout$, hence $w'(N_{G'}(R_\tin) \cap R_\tout) = w'(R_\tout) = w(R)$. 
Third, we consider $N_{G'}(R_\tin) \cap (V_\tout \setminus R_\tout)$. These are neighours of $R_\tin$ from those edges in $E_{\tout \rightarrow \tin}$ that do not have an endpoint in $R_\tout$. More precisely, we have 
\begin{align*}
N_{G'}(R_\tin) \cap (V_\tout \setminus R_\tout) 
&= \{ u_\tout : \{u_\tout, v_\tin\} \in E_{\tout \rightarrow \tin}, \ u_\tout \in V_\tout \setminus R_\tout, \ v_\tin \in R_\tin \} 
\\&= \{ u_\tout : (u,v) \in E, \ u \in V \setminus R, \ v \in R \} 
\\&= N_G^\tin (R).    
\end{align*}

It follows that $w'(N_{G'}(R_\tin) \cap (V_\tout \setminus R_\tout)) = w(N_G^\tin (R))$. Putting the three parts together, we conclude that 
\[
w'(N_{G'}(R_\tin)) = w(V \setminus R) + w(R) + w(N_G^\tin (R)) = w(N_G^\tin (R)) + w(V).\qedhere
\]
\end{proof}

\subsection{Global minimum vertex cut} The following lemma establishes a precise relationship between the minimum vertex cuts of $G'$ and $G$. 
\begin{lemma}\label{lemma:min-cut-G'-G}
$\kappa_{G'} = \kappa_G + w(V)$.    
\end{lemma}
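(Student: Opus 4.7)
The plan is to prove the two inequalities $\kappa_{G'} \le \kappa_G + w(V)$ and $\kappa_{G'} \ge \kappa_G + w(V)$ separately, and both directions are driven by the neighborhood identity from \Cref{lemma:neighbors-G'-G}. The underlying strategy is that a minimum vertex cut on either side of the reduction is, up to tightening, determined by its ``right side'' $R$ together with its in-neighborhood, and \Cref{lemma:neighbors-G'-G} precisely relates these quantities across $G$ and $G'$. The bridge in both directions is to pass through the canonical ``tight'' separator $N(R)$.

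For the upper bound, I would start with a minimum vertex cut $(L, S, R)$ of $G$ with $w(S) = \kappa_G$ and define $R' := \{v_\tin : v \in R\}$, $S' := N_{G'}(R')$, and $L' := V' \setminus (R' \cup S')$. Then I would check that $(L', S', R')$ is a valid vertex cut of $G'$: $R'$ is nonempty and $S'$ separates $R'$ from $L'$ by construction, while $L'$ is nonempty because $(L, S, R)$ being a cut forces $L \cap (R \cup N_G^\tin(R)) = \emptyset$, which translates into $\{v_\tout : v \in L\} \subseteq L'$. Applying \Cref{lemma:neighbors-G'-G} then gives
\[
w'(S') \;=\; w(N_G^\tin(R)) + w(V) \;\le\; w(S) + w(V) \;=\; \kappa_G + w(V),
\]
where the inequality uses $N_G^\tin(R) \subseteq S$ (since $N_G^\tin(R)$ is disjoint from both $R$ and $L$).

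For the lower bound, I would take a minimum vertex cut $(L', S', R')$ of $G'$ and invoke \Cref{prop:cuts-wlog} to assume $L' \subseteq V_\tout$ and $R' \subseteq V_\tin$, with $R := \{v : v_\tin \in R'\}$ and $L := \{v : v_\tout \in L'\}$ nonempty, disjoint, and carrying no $G$-edge from $L$ to $R$. Because $S'$ must contain every $G'$-neighbor of $R'$ outside $R'$, I obtain $w'(S') \ge w'(N_{G'}(R')) = w(N_G^\tin(R)) + w(V)$ from \Cref{lemma:neighbors-G'-G}. To finish, I would observe that the tri-partition $(V \setminus (R \cup N_G^\tin(R)),\ N_G^\tin(R),\ R)$ is itself a vertex cut of $G$: its left side contains $L$ and is hence nonempty, and no $G$-edge crosses from it to $R$ by the very definition of $N_G^\tin(R)$. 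This forces $w(N_G^\tin(R)) \ge \kappa_G$, and combining with the previous inequality yields $\kappa_{G'} = w'(S') \ge \kappa_G + w(V)$.

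The main obstacle I anticipate is purely bookkeeping: verifying the nonemptyness of the two left sides (the $L'$ in the first direction and the candidate $V \setminus (R \cup N_G^\tin(R))$ in the second) so that both produced tri-partitions are genuine vertex cuts. The weight identity itself comes for free from \Cref{lemma:neighbors-G'-G}, and otherwise the proof amounts to translating the principle ``the separator of a minimum cut is determined by the in-neighborhood of the right side'' between $G$ and $G'$.
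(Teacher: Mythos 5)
Your proof is correct and follows essentially the same route as the paper: both directions pass through the tight separator $N(R)$ and reduce the weight computation to \Cref{lemma:neighbors-G'-G}. The only cosmetic difference is that you use the containments $N_G^\tin(R) \subseteq S$ and $N_{G'}(R') \subseteq S'$ to get inequalities directly, where the paper invokes minimality (and positivity of weights) to assert the separators are exactly these neighborhoods.
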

\begin{proof}
Let $(L', S', R')$ be a minimum vertex cut in $G'$. By Proposition \ref{prop:cuts-wlog}, we may assume without loss of generality that $L' \subseteq V_\tout$ and $R' \subseteq V_\tin$. Put $L = \{ v \in V : v_\tout \in L' \}$ and $R = \{ v \in V : v_\tin \in R' \}$; then $L$ and $R$ are nonempty and disjoint, and there is no edge $(u,v) \in E$ such that $u \in L$ and $v \in R$. It follows that $w'(N_{G'}(R')) = w(N_G^\tin (R)) + w(V)$ by Lemma \ref{lemma:neighbors-G'-G}. Since $(L', S', R')$ is a minimum vertex cut and all vertex weights are positive, we have $S' = N_{G'}(R')$, so the weight of the cut is $w'(S') = w(N_G^\tin (R)) + w(V)$. Therefore, $(L \cup X, N_G^\tin(R), R)$, where $X = V \setminus (L \cup R \cup N_G^\tin(R))$, is a directed vertex cut in $G$ with weight $w(N_G^\tin(R))$. This concludes the first direction $\kappa_{G'} \geq \kappa_G + w(V)$. 

To prove the second direction, let $(L, S, R)$ be a minimum directed vertex cut in $G$. Let $L_\tout = \{ v_\tout : v \in L \}$, $R_\tin = \{ v_{\tin} : v \in R \}$. As $(L,S,R)$ is a directed vertex cut in $G$, the sets $L$ and $R$ are nonempty and disjoint, and there are no edges $(u,v) \in E$ such that $u \in L$ and $v \in R$. It follows from the definition of $G'$ that there are no edges between $L_\tout$ and $R_\tin$, and the two sets are nonempty and disjoint. Let $X = V_\tout \setminus (L_\tout \cup N_{G'}(R_\tin))$ and observe that there are no edges between $X$ and $R_\tin$. 
Then $(L_\tout \cup X, N_{G'}(R_\tin), R_\tin)$ is a vertex cut in $G'$. It has weight $w(N_{G'}(R_\tin)) = w(N_G^\tin (R)) + w(V)$ by Lemma \ref{lemma:neighbors-G'-G}, which gives us an upper bound on the minimum vertex cut size $\kappa_{G'}$ in $G'$. By the minimality of $(L,S,R)$, we have $S = N_G^\tin(R)$, so $\kappa_G = w(N_G^\tin(R))$ and hence $\kappa_{G'} \leq \kappa_G + w(V)$. 
\end{proof}

\subsection{All-pairs minimum vertex cut} For the all-pairs case, we proceed analogously to the global case. We use the following notation. 
For distinct $s, t \in V$, let $\kappa_G(s,t)$ denote the weight of the minimum weighted directed $s$-$t$-cut in $G$. 
For distinct $s', t' \in V'$, let $\kappa_{G'}(s',t')$ denote the weight of the minimum weighted undirected $s'$-$t'$-cut in $G'$. 
Note that these values are only defined if $(s, t) \notin E$ and $\{s', t'\} \notin E'$, respectively; otherwise we can treat them as $\infty$. 
Then $\kappa_{G}(s, t)$ is embedded in $G'$ as follows.

\begin{lemma}\label{lemma:pair-min-cut-G'-G}
Let $s,t \in V$ such that $s \neq t$ and $(s,t) \notin E$. Then $\kappa_{G}(s, t) = \kappa_{G'}(s_\tout, t_\tin) - w(V)$.
\end{lemma}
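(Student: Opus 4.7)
The proof mirrors that of \Cref{lemma:min-cut-G'-G} almost verbatim; the one structural difference is that the ``without loss of generality'' swap in \Cref{prop:cuts-wlog} is no longer free. Since $s_\tout \in V_\tout$ and $t_\tin \in V_\tin$, the dichotomy in \Cref{prop:cuts-in-G'} forces any undirected $(s_\tout, t_\tin)$-vertex cut $(L', S', R')$ in $G'$ to satisfy $L' \subseteq V_\tout$ and $R' \subseteq V_\tin$, as the alternative $L' \subseteq V_\tin$ would violate $s_\tout \in L'$. This forced orientation is the only place where the argument departs from the global case.

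For the direction $\kappa_{G'}(s_\tout, t_\tin) \geq \kappa_G(s,t) + w(V)$, I take a minimum undirected $(s_\tout, t_\tin)$-vertex cut $(L', S', R')$ in $G'$ and apply the forcing above. Setting $L = \{v : v_\tout \in L'\}$ and $R = \{v : v_\tin \in R'\}$, \Cref{prop:cuts-wlog} gives that $L, R$ are nonempty and disjoint with no directed $L$-to-$R$ edge, and additionally $s \in L$ and $t \in R$. With $X = V \setminus (L \cup R \cup N_G^\tin(R))$, the triple $(L \cup X, N_G^\tin(R), R)$ is a directed $(s,t)$-vertex cut in $G$. Minimality of $S'$ (together with positivity of weights) and \Cref{lemma:neighbors-G'-G} give $w'(S') = w(N_G^\tin(R)) + w(V)$ exactly as in \Cref{lemma:min-cut-G'-G}, so $\kappa_G(s,t) \leq w(N_G^\tin(R)) = \kappa_{G'}(s_\tout, t_\tin) - w(V)$.

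For the reverse direction, I start from a minimum directed $(s,t)$-vertex cut $(L, S, R)$ in $G$, form $L_\tout = \{v_\tout : v \in L\}$ and $R_\tin = \{v_\tin : v \in R\}$, and build the candidate cut $(L_\tout \cup X, N_{G'}(R_\tin), R_\tin)$ in $G'$ with $X = V_\tout \setminus (L_\tout \cup N_{G'}(R_\tin))$. The only bookkeeping beyond the proof of \Cref{lemma:min-cut-G'-G} is to verify $L_\tout \cap N_{G'}(R_\tin) = \emptyset$, which guarantees that $s_\tout$ lies in $L_\tout \cup X$ rather than in the separator: any edge between $L_\tout$ and $R_\tin$ in $G'$ is either a matching edge (forcing $L \cap R \neq \emptyset$) or an $E_{\tout \to \tin}$-edge (forcing a directed $L$-to-$R$ edge in $G$), both of which are ruled out. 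Since $s \in L$ and $t \in R$, the triple is an $(s_\tout, t_\tin)$-cut, and by minimality of $(L,S,R)$ we have $S = N_G^\tin(R)$, so its weight equals $w(N_G^\tin(R)) + w(V) = \kappa_G(s,t) + w(V)$ by \Cref{lemma:neighbors-G'-G}, giving $\kappa_{G'}(s_\tout, t_\tin) \leq \kappa_G(s,t) + w(V)$. Combining the two inequalities yields the claimed equality. I expect no real obstacle beyond the forced-side observation in the first paragraph; all other content is already inside \Cref{lemma:min-cut-G'-G}.
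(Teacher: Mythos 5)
Your proof is correct and follows essentially the same route as the paper's: both directions reuse the argument of \Cref{lemma:min-cut-G'-G}, translating a minimum cut of $G'$ into a directed $(s,t)$-cut of $G$ via \Cref{prop:cuts-wlog} and \Cref{lemma:neighbors-G'-G}, and conversely. Your two added observations (the orientation of the cut is forced by $s_\tout \in V_\tout$ rather than obtained by swapping, and $L_\tout \cap N_{G'}(R_\tin) = \emptyset$ so that $s_\tout$ stays on the left) are points the paper passes over implicitly, and they are handled correctly.
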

\begin{proof}
Given that $s \neq t$ and there is no edge $(s,t) \in E$, there is also no edge $\{s_\tout, t_\tin\} \in E'$, hence there exists as minimum vertex cut in $G'$ separating $s_\tout$ from $t_\tin$.

Let $(L', S', R')$ be a minimum vertex cut in $G'$ separating $s_\tout$ from $t_\tin$. Following Proposition \ref{prop:cuts-wlog}, we may assume without loss of generality that $s_\tout \in L' \subseteq V_\tout$ and $t_\tin \in R' \subseteq V_\tin$. Put $L = \{ v \in V : v_\tout \in L' \}$ and $R = \{ v \in V : v_\tin \in R' \}$; then $L$ and $R$ are disjoint, $s \in L$, $t \in R$, and there is no edge $(u,v) \in E$ such that $u \in L$ and $v \in R$. It follows that $w'(N_{G'}(R')) = w(N_G^\tin (R)) + w(V)$ by Lemma \ref{lemma:neighbors-G'-G}. By the minimality of $(L', S', R')$ and as all vertex weights are positive, we have $S' = N_{G'}(R')$, so the weight of the cut is $w'(S') = w(N_G^\tin (R)) + w(V)$. Therefore, $(L \cup X, N_G^\tin(R), R)$, where $X = V \setminus (L \cup R \cup N_G^\tin(R))$, is a directed vertex cut separating $s$ and $t$ in $G$ with weight $w(N_G^\tin(R))$. This concludes the first direction $\kappa_{G'}(s_\tout,t_\tin) \geq \kappa_G(s,t) + w(V)$

To prove the second direction, let $(L, S, R)$ be a minimum directed vertex cut separating $s \in L$ from $t \in R$ in $G$. Let $L_\tout = \{ v_\tout : v \in L \}$, $R_\tin = \{ v_{\tin} : v \in R \}$. As $(L,S,R)$ is a directed vertex cut in $G$, the sets $L$ and $R$ are nonempty and disjoint, and there are no edges $(u,v) \in E$ such that $u \in L$ and $v \in R$. It follows from the definition of $G'$ that $s_\tout \in L_\tout$, $t_\tin \in R_\tin$, there are no edges between $L_\tout$ and $R_\tin$, and the two sets are disjoint. Let $X = V_\tout \setminus (L_\tout \cup N_{G'}(R_\tin))$ and observe that there are no edges between $X$ and $R_\tin$ in $G'$. 
Then $(L_\tout \cup X, N_{G'}(R_\tin), R_\tin)$ is a vertex cut in $G'$ separating $s_\tout \in L_\tout$ from $t_\tin \in R_\tin$. It has weight $w(N_{G'}(R_\tin)) = w(N_G^\tin (R)) + w(V)$ by Lemma \ref{lemma:neighbors-G'-G}, which gives us an upper bound on $\kappa_{G'}(s_\tout, t_\tin)$ in $G'$. By the minimality of $(L,S,R)$, we have $S = N_G^\tin(R)$, so $\kappa_G(s,t) = w(N_G^\tin(R))$ and hence $\kappa_{G'}(s_\tout,t_\tin) \leq \kappa_G(s,t) + w(V)$.
\end{proof}

\subsection{Single-source/sink minimum vertex cut } Recall that, in addition to the input graph $G=(V,E)$, we are given a source vertex $s$ or a sink vertex $t$. We denote the single-source and single-sink connectivity by $\kappa_{G}(s, \ast)$ and $\kappa_{G}(\ast, t)$, respectively. As single-source and single-sink cuts in the undirected case are equivalent, we simply write $\kappa_{G'}(v)$ for $v \in V'$.

\begin{corollary}\label{corollary:single-source-min-cut-G'-G}
$\kappa_{G}(s, \ast) = \kappa_{G'}(s_\tout) - w(V)$ and $\kappa_{G}(\ast, t) = \kappa_{G'}(t_\tin) - w(V)$.
\end{corollary}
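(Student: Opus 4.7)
The plan is to derive this corollary as a direct consequence of \Cref{lemma:pair-min-cut-G'-G} by minimizing over the free endpoint. Since single-source and single-sink minimum vertex cuts are just minima of pairwise cuts over non-adjacent pairs, the only thing to verify is that the admissible second endpoints on the two sides are in one-to-one correspondence and satisfy the right non-adjacency condition so that \Cref{lemma:pair-min-cut-G'-G} applies to each corresponding pair.

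For the single-source case, I would expand $\kappa_{G'}(s_\tout) = \min \{ \kappa_{G'}(s_\tout, v') : v' \in V' \setminus \{s_\tout\},\ \{s_\tout, v'\} \notin E'\}$. The key observation is that $V_\tout$ is a clique in $G'$, so $s_\tout$ is adjacent to every other vertex of $V_\tout$; hence no admissible $v'$ lies in $V_\tout$, and every admissible choice has the form $v' = t_\tin$ for some $t \in V$. Inspecting the edges incident to $s_\tout$ in $V_\tin$, the edge $\{s_\tout, t_\tin\}$ is in $E'$ iff $t = s$ (matching edge in $E_\text{matching}$) or $(s,t) \in E$ (connector in $E_{\tout \to \tin}$). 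Thus the admissible sinks are exactly those $t_\tin$ with $t \neq s$ and $(s,t) \notin E$, which is precisely the set of admissible sinks for $\kappa_G(s, \ast)$.

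Applying \Cref{lemma:pair-min-cut-G'-G} to each such admissible pair yields $\kappa_{G'}(s_\tout, t_\tin) = \kappa_G(s,t) + w(V)$. Taking the minimum over all admissible $t$ on both sides (and treating the "no valid $t$" case as $+\infty$ on both sides consistently, which matches when $s$ dominates every other vertex) gives $\kappa_{G'}(s_\tout) = \kappa_G(s, \ast) + w(V)$. The single-sink statement $\kappa_G(\ast,t) = \kappa_{G'}(t_\tin) - w(V)$ is symmetric: now I use that $V_\tin$ is a clique to force every admissible first endpoint out of $V_\tin$, so it must have the form $s_\tout$ with $s \neq t$ and $(s,t) \notin E$, and the same pairwise identity closes the argument.

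There is no substantive obstacle here: the only care required is in bookkeeping the non-adjacency conditions so that the bijection between admissible directed endpoints in $G$ and admissible undirected endpoints in $G'$ is exact, and in exploiting the clique structure of $V_\tout$ and $V_\tin$ to restrict the minimum to the "opposite side" of $G'$.
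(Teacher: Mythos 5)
Your proposal is correct and follows essentially the same route as the paper: both reduce the single-source/sink quantity to a minimum of pairwise cuts $\kappa_{G'}(s_\tout, v_\tin)$ over $v \in V$, using the clique structure of $V_\tout$ and $V_\tin$ to exclude same-side endpoints, and then apply Lemma~\ref{lemma:pair-min-cut-G'-G} termwise. Your version merely spells out the bijection of admissible endpoints and the $\infty$ conventions more explicitly than the paper's one-line computation.
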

\begin{proof}
Using the fact that $V_\tout$ and $V_\tin$ are cliques and Lemma \ref{lemma:pair-min-cut-G'-G}, we have 
\begin{align*}  
\kappa_{G'}(s_\tout) &= \min_{v \in V} \kappa_{G'}(s_\tout, v_\tin) = \min_{v \in V} \kappa_{G}(s, v) + w(V) = \kappa_{G}(s, \ast) + w(V) \ \text{ and } \\ 
\kappa_{G'}(t_\tin) &= \min_{v \in V} \kappa_{G'}(v_\tout, t_\tin) = \min_{v \in V} \kappa_{G}(v, t) + w(V) = \kappa_{G}(\ast, t) + w(V) . \qedhere\end{align*}
\end{proof}

\subsection{Steiner minimum vertex cut } Recall that, in addition to the input graph $G=(V,E)$, we are given a terminal set $T$. Let $T' = T_\tout \cup T_\tin$. We show that the undirected Steiner minimum vertex cut solution in $G'$ with terminal set $T'$ corresponds to a directed optimal solution. 

\begin{corollary}\label{corollary:steiner-min-cut-G'-G}
$\min_{s,t \in T } \kappa_{G}(s, t) = \min_{s', t' \in T} \kappa_{G'}(s', t') - w(V)$.
\end{corollary}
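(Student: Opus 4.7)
The plan is to reduce this to \Cref{lemma:pair-min-cut-G'-G} by carefully analyzing which pairs $(s', t') \in T' \times T'$ can achieve the minimum on the right-hand side. (I read the $T$ in the statement as $T'$; otherwise the identity does not parse.)

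First, I would observe that any pair $(s', t')$ with both endpoints in $V_\tout$, or both in $V_\tin$, is connected by an edge of $G'$ (since these are cliques), so $\kappa_{G'}(s', t') = \infty$. Consequently, the minimum on the right-hand side is achieved by some pair of the form $(s_\tout, t_\tin)$ with $s, t \in T$; the case $(t_\tin, s_\tout)$ is symmetric in the undirected graph and yields the same value. Moreover, if $s = t$ or $(s,t) \in E$, then $\{s_\tout, t_\tin\} \in E'$ and $\kappa_{G'}(s_\tout, t_\tin) = \infty$, so the minimizing pair must satisfy $s \neq t$ and $(s,t) \notin E$, i.e., $\kappa_G(s,t)$ is well-defined and finite.

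For the inequality $\min_{s,t \in T} \kappa_{G}(s, t) \geq \min_{s', t' \in T'} \kappa_{G'}(s', t') - w(V)$, I would pick $s, t \in T$ achieving the directed minimum (assuming it is finite; otherwise the inequality is trivial). By \Cref{lemma:pair-min-cut-G'-G}, $\kappa_{G'}(s_\tout, t_\tin) = \kappa_G(s, t) + w(V)$, and since $s_\tout, t_\tin \in T'$, this value is an upper bound on the undirected Steiner cut, giving the desired inequality.

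For the reverse inequality, I would pick the minimizing pair $(s', t') \in T' \times T'$ on the right-hand side. By the preliminary observation, up to swapping we may write $s' = s_\tout$, $t' = t_\tin$ for some $s, t \in T$ with $s \neq t$ and $(s,t) \notin E$. Applying \Cref{lemma:pair-min-cut-G'-G} again yields $\kappa_{G'}(s_\tout, t_\tin) - w(V) = \kappa_G(s,t) \geq \min_{s,t \in T} \kappa_G(s,t)$, closing the loop. The only delicate point is the bookkeeping about when $\kappa_{G'}(s', t')$ is defined versus $\infty$; once that is pinned down by the clique/matching structure, the corollary follows immediately from the pairwise lemma.
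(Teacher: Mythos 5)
Your proposal is correct and follows essentially the same route as the paper's proof: rule out pairs lying in the same clique, reduce the surviving pairs to the form $(s_\tout, t_\tin)$ with $s \neq t$ and $(s,t) \notin E$, and then invoke Lemma \ref{lemma:pair-min-cut-G'-G} termwise. Your reading of the second minimum as ranging over $T'$ rather than $T$ matches the paper's intent (the statement has a typo), so no further changes are needed.
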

\begin{proof}
Let $s', t' \in T'$ be arbitrary. If $s', t' \in V_\tin$ or $s', t' \in V_\tout$, there is a clique edge between $s'$ and $t'$, so $\kappa_{G'}(s', t') = \infty$. Therefore, we may without loss of generality (as $G'$ is undirected) assume that $s' \in V_\tout \cap T' = T_\tout$ and $t' \in V_\tin \cap T' = T_\tin$. Then $s' = s_\tout$ and $t' = t_\tin$ for some $s,t \in T$. We may assume $s \neq t$ and $(s,t) \notin E$, otherwise we again have $\kappa_G(s,t) = \kappa_{G'}(s', t') = \infty$. Then by Lemma \ref{lemma:pair-min-cut-G'-G}, we have $\kappa_{G}(s, t) = \kappa_{G'}(s', t') - w(V)$, and the claim follows. 
\end{proof}

\subsection{Combining Together}
We are ready to prove Theorem \ref{thm:reduction}.
\begin{proof}[Proof of Theorem \ref{thm:reduction}]
By construction, $G'$ has $2n$ vertices and $2\binom{n}{2} + n + m \leq O(n^2)$ edges and the terminal set $T'$ for the Steiner variant has size $2|T|$. 
By Lemmas \ref{lemma:min-cut-G'-G}, \ref{lemma:pair-min-cut-G'-G} and Corollaries \ref{corollary:single-source-min-cut-G'-G}, \ref{corollary:steiner-min-cut-G'-G}, the solutions in $G'$ for global, all-pairs, single-source/sink, and Steiner variants of vertex cuts correspond to directed solutions in $G$. Moreover, given an undirected vertex cut $(L_\tout, S', R_\tin)$ in $G'$, we can efficiently extract a directed vertex cut $(L, S, R)$ in $G$, where $S = V \setminus (L \cup R)$ and $w(S) = w(S') -w(V)$, 
\end{proof}

\section{Concluding Remarks} One can notice that our reduction from directed to undirected graph has a loss in boosting the number of edges to $O(n^2)$, which is not a good reduction when the graph is moderately sparse. It is a great open problem to prove that directed vertex connectivity can be solved in almost linear time for any sparsity, not necessarily done by a black-box reduction.

\subsection*{Acknowledgment} 
This research was partially supported by the ETH Zürich Foundation, Dr. Max Rössler, and the Walter Haefner Foundation.

\bibliographystyle{alpha}
\bibliography{refs,refs-sorrahcai}

\end{document}